\newcommand{\N}{\mathbb{N}}
\newcommand{\is}{\,\operatorname{\bf is}\,}
\definecolor{keywordcolor}{rgb}{0.7, 0.1, 0.1}   % red
\definecolor{commentcolor}{rgb}{0.4, 0.4, 0.4}   % grey
\definecolor{symbolcolor}{rgb}{0.0, 0.1, 0.6}    % blue
\definecolor{sortcolor}{rgb}{0.1, 0.5, 0.1}      % green
\newtheorem{theorem}{Theorem}%  meant for continuous numbers
\newtheorem{remark}{Remark}%
\begin{document}

\title[Reimplementing Mizar]{Reimplementing Mizar in Rust}

%%=============================================================%%
%% GivenName	-> \fnm{Joergen W.}
%% Particle	-> \spfx{van der} -> surname prefix
%% FamilyName	-> \sur{Ploeg}
%% Suffix	-> \sfx{IV}
%% \author*[1,2]{\fnm{Joergen W.} \spfx{van der} \sur{Ploeg}
%%  \sfx{IV}}\email{iauthor@gmail.com}
%%=============================================================%%

\author*{\fnm{Mario} \sur{Carneiro}}\email{mcarneir@andrew.cmu.edu}

\affil{\orgdiv{Hoskinson Center for Formal Mathematics}, \orgname{Carnegie Mellon University}, \orgaddress{\street{5000 Forbes Avenue}, \city{Pittsburgh}, \state{PA} \postcode{15213}, \country{USA}}}

\abstract{
  This paper describes a new open-source proof processing tool, \texttt{mizar-rs}, a wholesale reimplementation of the Mizar proof system, written in Rust. In particular, the main Mizar processing tool \texttt{verifier} is implemented, which can be used to verify the Mizar Mathematical Library (MML). This is to our knowledge the first and only external implementation of the system. We have used this to verify the entire MML in 11.8 minutes on 8 cores, a 4.8$\times$ speedup over the original Pascal implementation. Since Mizar is not designed to have a small trusted core, checking Mizar proofs entails following Mizar closely, so our ability to detect bugs is limited. Nevertheless, we were able to find multiple memory errors, four soundness bugs in the original (which were not being exploited in MML), in addition to one non-critical bug which was being exploited in 46 different MML articles. We hope to use this checker as a base for proof export tooling, as well as revitalizing development of the language.}

\keywords{Mizar, proof checker, software, Rust}

%%\pacs[JEL Classification]{D8, H51}

%%\pacs[MSC Classification]{35A01, 65L10, 65L12, 65L20, 65L70}

\maketitle

\section{Introduction}

The Mizar language \cite{bancerek2015mizar} is a proof language designed by Andrzej Trybulec in 1973 for writing and checking proofs in a block structured natural deduction style. The Mizar project more broadly has been devoted to the development of the language and tooling, in addition to the Mizar Mathematical Library (MML) \cite{bancerek2018role}, a compendium of ``articles'' on a variety of mathematical topics written in the Mizar language. The MML is one of the largest and oldest formal mathematical libraries in existence, containing (at time of writing) 1434 articles and over 65,000 theorems.

The ``Mizar system'' is a collection of tools for manipulating Mizar articles, used by authors to develop and check articles for correctness, and maintained by the Association of Mizar Users (SUM). Arguably the most important tool in the toolbox is \texttt{verifier}, which reads a Mizar article and checks it for logical correctness. The starting point for this work is the goal of extracting formal proofs from the \texttt{verifier} which can be checked by a tool not directly connected to Mizar. This turned out to be quite challenging, and this paper will explain how we achieved a slightly different goal -- to build \texttt{mizar-rs}, a Rust program which checks proofs in the same manner as the \texttt{verifier}, with an eye for proof generation.

\subsection*{Why hasn't this been done before?}

The first challenge is that there is no ``spec'' for the Mizar language: a Mizar proof is one that \texttt{verifier} accepts. Some of it is ``obvious'' things which users can figure out by playing with the language, and there are reference works at varying levels of detail \cite{grabowski2010mizar, muzalewski1993outline}, but if we want to check the entire MML then approximations aren't going to cut it. There are some writings online which give the general sense of the algorithm, but there is nowhere to look for details except the source code.

This leads to the second challenge, which is that the source code for the Mizar system is not publicly available, or at least was not while this project was under development.\footnote{A happy after-effect of this project is that the Mizar source code itself has been made public! It is now available at \url{https://github.com/MizarProject/system}.} It was only made available to members of the SUM. Luckily, it is possible to obtain a membership, but the price of admission is that one has to write a Mizar article first. Thus, for this project the author contributed an article regarding the divergence of the sum of prime reciprocals \cite{PRIMRECI}.

The third challenge is that Mizar does not have a ``small trusted kernel'' in any reasonable sense. The main components that perform trusted activities are the ``analyzer'' and the ``checker'', and most of the code of the verifier is contained in these modules. Moreover, the code is written in Pascal and Polish (two languages that the author is not very good at), and is 50 years old, meaning that there is a huge amount of technical debt in the code.\footnote{Technically the current incarnation of Mizar (``PC Mizar'') only dates to 1986. See \cite{matuszewski2005mizar} for the early history of Mizar.} To get a good sense of what everything is doing and why it is correct (or not), we estimated the best option would simply be to rewrite it and closely follow what the original code is doing. Rust was chosen as the implementation language because it is well suited to writing console applications with a focus on correctness, performance, and refactoring, all of which were important for something that could play the role of \texttt{verifier}.

We are not the first to present aspects of Mizar from an outsider's perspective. J. Harrison's Mizar mode in HOL Light \cite{harrison-mizar} is a simulation of large parts of the Mizar frontend, with the MESON prover used in place of the Mizar checker. J. Urban's MPTP \cite{urban2006mptp} project exports the statements of checker goals and sends them to ATPs. And most recently, Kaliszyk and P{\k{a}}k presented a model of Mizar as encoded in Isabelle \cite{kaliszyk2019semantics}, but stop short of a fully automatic proof export mechanism.

One may reasonably ask why the MPTP export is not already everything you could wish regarding proof export. There are two problems: (1) Because it is using external ATPs rather than Mizar's own code, it does not and cannot reasonably be expected to achieve 100\% success on the MML, and even when it works the Mizar checker often outperforms the external ATP, because the MML is significantly overfit to the Mizar checker. (2) Because it instruments the ``checker inference'' level, we verify ``\texttt{by}'' steps but not the skeleton steps that glue them together, and some significant type reasoning happens in the analyzer.

In this paper, we will try to give an implementer's perspective of Mizar: how it works under the hood, and some things we discovered while trying to replicate the functionality. The reader should be aware that this paper is focused on internal details and implementation correctness of the Mizar system, not the abstract theory of Mizar (that is, Tarski--Grothendieck set theory and the soft type system). See \cite{grabowski2010mizar,kaliszyk2019semantics} for information on these aspects of the language.

(This is an expanded version of a paper presented at ITP 2023. This paper includes major revisions of \autoref{sec:user-internals} and \autoref{sec:results}, and new sections \ref{sec:nameres} and \ref{sec:parser}.)

\section{Mizar internals: A (determined) user's perspective}\label{sec:user-internals}

When one runs the \texttt{verifier} on a Mizar file, say \texttt{tarski.miz}, one will see something like this:
\begin{verbatim}
Verifier based on More Strict Mizar Processor, Mizar Ver. 8.1.11 (Linux/FPC)
Copyright (c) 1990-2022 Association of Mizar Users
Processing: mizshare/mml/tarski.miz

Parser   [ 138]   0:00
MSM      [ 132]   0:00
Analyzer [ 137]   0:00
Checker  [ 137]   0:00
Time of mizaring: 0:00
\end{verbatim}
This refers to modules named \texttt{Parser}, \texttt{MSM}, \texttt{Analyzer}, \texttt{Checker}, so one can reasonably guess that the code is split up into multiple passes and these are the names of the modules.
\begin{itemize}
    \item \texttt{Parser} reads the text and converts it into an abstract syntax tree.
    \item \texttt{MSM} stands for ``More Strict Mizar'', which is a slightly more elaborated version of the syntax tree. Notably, this pass does (some kinds of) name resolution, as well as resolving and auto-binding reserved variables.
    \item \texttt{Analyzer} is responsible for processing the large scale logical structure of a Mizar document. It resolves symbol overloading and checks types, as well as tracking the ``thesis'' as it is transformed by Mizar proof steps.
    \item \texttt{Checker} is called by the analyzer whenever there are atomic proof obligations. It is sometimes referred to as the ``\lstinline|by|'' proof automation because it is called whenever there is a step proved using the \lstinline|by| keyword.
\end{itemize}

Both the analyzer and the checker are soundness-critical components. Clearly the checker is important because it actually checks proofs -- a bug here will mean that an incorrect proof is accepted -- but the analyzer is also important because it determines the proof obligations that will be sent to the checker. If the analyzer sends the wrong assertion to the checker or simply doesn't call the checker at all, then it might assert a theorem that hasn't been properly checked.

\begin{figure}[htb]
  \begin{center}
  \begin{tikzpicture}[auto,every node/.style={fill=white}]
    \node at (0,-2) [fill=gray!30, shape=rounded rectangle] (accom) {accom};
    \node at (2.5,-1.5) [fill=gray!30, shape=rounded rectangle] (parser) {parser\rlap{\phantom{l}}};
    \node at (2.5,-4) [fill=gray!30, shape=rounded rectangle] (msm) {MSM};
    \node at (2.5,-6) [fill=gray!30, shape=rounded rectangle] (analyzer) {analyzer};
    \node at (2.5,-8) [fill=gray!30, shape=rounded rectangle] (checker) {checker\rlap{\phantom{pl}}};
    \node at (-3,-8) [fill=gray!30, shape=rounded rectangle] (exporter) {exporter\rlap{\phantom{pl}}};

    \node at (0,0) (global) [draw, align=center, anchor=north, label={north:{\emph{global data}}}] {%
      {\texttt{mml.ini}}\\[-2pt]
      {\texttt{mml.vct}}\\[-2pt]
      {\texttt{mizar.dct}}
    };
    \draw[->] (global) -- (accom);

    \node at (-6,-0.5) [draw, anchor=north] (dre) {\color{red}\emph{dep}\texttt{.dre}};
    \draw[->] (dre) .. controls +(4,0) and +(-.3,.4) .. (accom.150);

    \node (g1) [draw, align=center, below=.2 of dre] {%
      {\color{red}\emph{dep}\texttt{.dcl}}\\[-2pt]
      {\color{red}\emph{dep}\texttt{.dco}}\\[-2pt]
      {\color{red}\emph{dep}\texttt{.def}}\\[-2pt]
      {\color{red}\emph{dep}\texttt{.did}}\\[-2pt]
      {\color{red}\emph{dep}\texttt{.dno}}\\[-2pt]
      {\color{red}\emph{dep}\texttt{.dpr}}\\[-2pt]
      {\color{red}\emph{dep}\texttt{.drd}}\\[-2pt]
      {\color{red}\emph{dep}\texttt{.sch}}\\[-2pt]
      {\color{red}\emph{dep}\texttt{.the}}
    };
    \coordinate (exporter_out) at (-5,-5);
    \draw[->] (exporter) .. controls +(-2,0) and +(.5,0) .. (g1.295);
    \draw[->] (g1.65) .. controls +(1,0) and +(-.6,.3) .. (accom.165);

    \node at (2.5,-0.1) [draw, anchor=north] (miz) {\texttt{.miz}};
    \draw[->] (miz) -- (accom);
    \draw[->] (miz) -- (parser);

    \node at (-4,-2.5) (g2) [draw, align=center, anchor=north] {%
      {\color{red}\texttt{.aco}}\\[-2pt]
      {\color{red}\texttt{.ano}}\\[-2pt]
      {\texttt{.dct}}\\[-2pt]
      {\color{red}\texttt{.dcx}}\\[-2pt]
      {\color{red}\texttt{.evl}}\\[-2pt]
      {\texttt{.fil}}\\[-2pt]
      {\color{red}\texttt{.frm}}\\[-2pt]
      {\texttt{.nol}}
    };
    \draw[->] (accom.180) .. controls +(-3,0) and +(0,.5) .. (g2.90);

    \node at (2.5,-7) [draw] (xml) {\color{red}\texttt{.xml}};
    \draw[->] (analyzer) -- (xml);
    \draw[->] (xml) -- (checker);
    \draw[->] (xml) -- (exporter);

    \node at (2.5,-9) [scale=1.25] (check) {\checkmark};
    \draw[->] (checker) -- (check);

    \node at (-1,-2.5) (g4) [draw, align=center, anchor=north] {%
      {\color{red}\texttt{.dfe}}\\[-2pt]
      {\color{red}\texttt{.dfx}}\\[-2pt]
      {\color{red}\texttt{.eid}}\\[-2pt]
      {\color{red}\texttt{.erd}}\\[-2pt]
      {\color{red}\texttt{.esh}}\\[-2pt]
      {\color{red}\texttt{.eth}}
    };
    \coordinate (checker_in) at (-1,-6);
    \draw[->] (accom.180) .. controls +(-.3,0) and +(0,.3) .. (g4.90);
    \draw [white,line width=4pt,opacity=1] (checker_in) .. controls +(0,-1) and +(0,0) .. (checker.170);
    \draw[->] (g4) -- (checker_in) .. controls +(0,-1) and +(0,0) .. (checker.170);

    \node at (0,-2.5) [draw, anchor=north] (epr) {\color{red}\texttt{.epr}};
    \draw[->] (accom) -- (epr);
    \draw (epr) .. controls +(0,-2.6) and +(0,.9) .. (checker_in);
    \draw[->] (epr.270) .. controls +(0,-.6) and +(0,0) .. (analyzer.135);

    \node at (1,-2.5) (g3) [draw, align=center, anchor=north] {%
      {\color{red}\texttt{.dfs}}\\[-2pt]
      {\color{red}\texttt{.eno}}
    };
    \draw[->] (accom.0) .. controls +(.3,0) and +(0,.3) .. (g3.90);
    \draw[->] (g3) .. controls +(0,-1.8) and +(-.25,0.3) .. (analyzer.135);

    \node at (-3,-2.5) (g5) [draw, align=center, anchor=north] {%
      {\texttt{.cho}}\\[-2pt]
      {\texttt{.prf}}\\[-2pt]
      {\texttt{.sgl}}\\[-2pt]
      {\color{red}\texttt{.vcl}}
    };
    \coordinate (exporter_in) at (-3,-6);
    \draw[->] (accom.180) .. controls +(-2,0) and +(0,.5) .. (g5.90);
    \draw[->] (g5) -- (exporter_in) -- (exporter);

    \node at (4.35,-2.5) (g6) [draw, align=center, anchor=north] {%
      {\texttt{.err}}\\[-2pt]
      {\texttt{.log}}
    };
    \draw[->] (accom) -- (1,-2) .. controls +(2,0) and +(0,0) .. (g6.160);
    \draw[->] (parser) -- (g6);

    \node at (-2,-2.5) (g7) [draw, align=center, anchor=north] {%
      {\color{red}\texttt{.atr}}\\[-2pt]
      {\color{red}\texttt{.ecl}}\\[-2pt]
      {\texttt{.ere}}
    };
    \draw[->] (accom.180) .. controls +(-1,0) and +(0,.5) .. (g7.90);
    \draw (g7) .. controls +(0,-2) and +(0,1) .. (checker_in);
    \draw (g7) .. controls +(0,-2) and +(0,1) .. (exporter_in);

    \node at (0,-9) [draw] (bex) {\color{red}\texttt{.bex}};
    \draw[->] (checker.south) .. controls +(0,-.2) and +(0,.5) .. (bex.north);

    \node at (-1,-9) [draw] (fex) {\color{red}\texttt{.fex}};
    \draw[->] (checker.south) .. controls +(0,-.2) and +(0,.5) .. (fex.north);
    \draw[->] (exporter) -- (fex);

    \node at (-6,-5.5) [draw] (dfr) {\color{red}\texttt{.dfr}};
    \draw[->] (exporter) .. controls +(-1.9,0) and +(.9,0) .. (dfr.east);

    \node at (4.35,-1.5) [draw] (idx) {\color{red}\texttt{.idx}};
    \draw[->] (parser) edge[bend left=10] (idx);
    \draw[->] (idx) edge[bend left=10] (parser);

    \node at (2.5,-2.5) [draw, align=center, anchor=north] (wsx) {%
      {\color{red}\texttt{.frx}}\\
      {\color{red}\texttt{.wsx}}
    };
    \draw [white,line width=4pt,opacity=1] (parser) -- (wsx);
    \draw[->] (parser) -- (wsx);
    \draw[->] (wsx) -- (msm);

    \node at (2.5,-5) [draw] (par) {\texttt{.par}};
    \draw[->] (msm) -- (par);
    \draw[->] (par) -- (analyzer);

    \node at (0.5,-5) [draw] (ref) {\texttt{.ref}};
    \draw [white,line width=4pt,opacity=1] (msm) -- (ref);
    \draw[->] (msm) -- (ref);
    \draw (ref.180) .. controls +(0,0) and +(0,0.5) .. (checker_in);

    \node at (4.35,-4) [draw] (msx) {\color{red}\texttt{.msx}};
    \draw[->] (msm) edge[bend left=10] (msx);
    \draw[->] (msx) edge[bend left=10] (msm);

    \node [draw, dotted, fill=none, fit=(dre) (g1), inner sep=6pt,label={north:{\emph{in \texttt{prel/}}}}] {};

    \node [draw, dotted, fill=none, fit=(miz) (parser) (wsx) (msm) (par) (analyzer) (xml) (checker) (check), inner sep=3pt, label={north:{\emph{main verifier}}}] {};

    \node [draw, dotted, fill=none, fit=(fex) (bex), inner sep=4.5pt, label={west:{\emph{MPTP proof export}}}] {};
  \end{tikzpicture}
  \end{center}
  \caption{Data flow between Mizar components. Files with only an extension like \texttt{.miz} correspond to files named after the input \emph{art}\texttt{.miz} file. Files in {\color{red}red} are in XML format. For example, {\color{red}\emph{art}\texttt{.atr}} is created by the accommodator and read by the checker and the exporter. Files like {\color{red}\emph{dep}\texttt{.dcl}} are produced for the current article by the exporter and aggregated by the accommodator in articles importing that dependency.}
  \label{fig:flow}
\end{figure}

Another visible effect of running \texttt{verifier} on a \texttt{.miz} file is that a huge number of other files are generated with bespoke extensions, which reveals another implementation quirk of the Mizar system, which is that the four components above are very loosely coupled and communicate only via the file system (see \autoref{fig:flow}). The parser reads the \texttt{.miz} file and produces a \texttt{.wsx} file, which is read by \texttt{MSM} to produce \texttt{.msx}, which is read by the \texttt{transfer2analyzer} module (not mentioned above) to produce \texttt{.par}, which is read by \texttt{Analyzer} to produce \texttt{.xml}, which is read by \texttt{Checker} and verified.

The general data flow of a Mizar verification looks like this:
\begin{itemize}
  \item The first tool to process the article is the ``accommodator,'' \texttt{accom}. This is the only tool which reads data from articles other than the current one; it is responsible for aggregating data from imports and collecting them for the current article. The many files produced in this pass are consumed as needed by later stages in the pipeline, to provide information about the extant declarations of various kinds in the environment.

  \item The next pass is the parser, which parses the body of the \textit{art}\texttt{.miz} article and produces a {\color{red}\textit{art}\texttt{.wsx}} file (Weakly Strict Mizar) containing the AST, along with {\color{red}\textit{art}\texttt{.frx}} containing the formats declared in the article.

  \item The MSM pass reads {\color{red}\textit{art}\texttt{.wsx}} and writes {\color{red}\textit{art}\texttt{.msx}}, performing name resolution and filling in the types of \texttt{reservation} variables in statements.

  \item The \texttt{transfer2analyzer} pass appears to be for backward compatibility reasons, as it reads {\color{red}\textit{art}\texttt{.msx}} and translates it to \textit{art}\texttt{.par}, which is the same thing but in a less extensible format. It also resolves format references.

  \item The analyzer pass reads \textit{art}\texttt{.par} and writes {\color{red}\textit{art}\texttt{.xml}}, which is the article AST again but using fully elaborated and typechecked terms, and with all the statements of checker subgoals explicitly annotated.

  \item The checker is the last pass. It reads \textit{art}\texttt{.xml} and verifies the theorems.
\end{itemize}

Thanks to the efforts of J. Urban in 2004 \cite{xmlizing}, most of these internal files between the components are in XML format, highlighted in {\color{red}red}. Of the non-XML files, most of them have a simple format, with the notable exception of \emph{art}\texttt{.miz} (see \autoref{sec:parser}) and \emph{art}\texttt{.par} (which we managed to avoid reading or writing by using \emph{art}\texttt{.msx} instead).

(\texttt{mizar-rs} has a significantly different architecture. One can find identifiable components corresponding to ``accom'', ``parser'', ``MSM + analyzer'', ``checker'', and ``exporter'', and for each PC Mizar component one can either enable it in \texttt{mizar-rs}, in which case it will read the input files of that component and perform the work, or disable it in which case it will read the output of that component. This allows mixing and matching between the two implementations, for example running \texttt{accom} to generate all the intermediate files, and then running only the ``checker'' component of \texttt{mizar-rs}, which will expect e.g. \emph{art}\texttt{.eth} to exist. If all components are enabled, only the \emph{dep}\texttt{.dre} files, the global files \texttt{mml.ini}, \texttt{mml.vct}, \texttt{mizar.dct}, and the article source files \emph{art}\texttt{.miz} are used, all of which are human-written input files.)

If one opens one of these files, one is presented with the next major challenge, which is that terms are pervasively indexed and hence reading the expressions can be quite difficult. Moreover, there are many \emph{distinct} index classes which are differentiated only by the context in which they appear, so without knowing how the program processes the indices or what array is being accessed it is hard as an outsider to follow the references. In the Rust implementation this issue is addressed by using ``newtypes'' to wrap each integer to help distinguish different indexing sets. There are currently 36 of these newtypes defined: for example there are numbers for functors, selectors, predicates, attributes, formats, notations, functor symbols (not the same!), left bracket symbols, reserved identifiers, etc.

After parsing these, one ends up with an expression such as\\ $\forall \_:M_1,\ R_4(K_1(B_1,K_2(N_2)),N_1)$, which means something like ``for all $x$ of the first type, fourth relation holds of the first function applied to $x$ and the second function applied to $2$, and $1$''. For many purposes, this is sufficient for debugging, but one tends to go cross-eyed staring at these expressions for too long. Ideally we would be able to reverse this indexification to obtain the much more readable expression $\forall x:{\bf Nat},x\cdot (-2) \le 1$. However, the code to do this does not exist anywhere in Mizar, because the Mizar checker never prints expressions. The only output of the checker is a list of (\texttt{line}, \texttt{col}, \texttt{err\_code}) triples, which are conventionally postprocessed by the \texttt{errflag} tool to insert markers like the following in the text:

\begin{lstlisting}
for x being Nat holds x = x
proof
  let x be Nat;
end;
::>,70
::>
::> 70: Something remains to be proved
\end{lstlisting}
This error message is pointing at the \lstinline|end| keyword, but notably it does not say \emph{what} remains to be proved, here $x=x$. A debug build of Mizar will actually print out expressions to the \textit{art}\texttt{.inf} file, but they are similar to the $R_4(K_1(\dots))$ style.

Luckily, this issue has been addressed outside the main Mizar codebase: J. Urban's HTMLizer is an XSLT stylesheet which can transform the XML intermediate files into fully marked up reconstructed Mizar documents, and which we adapted to design the formatter.

With the present version of the formatter, and with appropriate debugging enabled, an input like the following:
\begin{lstlisting}
for x,y being Nat holds x = 1 & y = 2 implies x + y = 3;
\end{lstlisting}
yields this debugging trace:
\begin{verbatim}
input: ∃ b0: natural set, b1: natural set st
  (b0 = 1) ∧ (b1 = 2) ∧ ¬((b0 + b1) = 3)
refuting 0 @ TEST:28:33:
  ∃ b0: natural set, b1: natural set st
    (b0 = 1) ∧
      (b0 c= 1) ∧
      (1 c= b0) ∧
      (b1 = 2) ∧
      (b1 c= 2) ∧
      (2 c= b1) ∧
      ¬((b0 + b1) = 3) ∧
      (((b0 + b1) c= 3) → ¬(3 c= (b0 + b1)))
\end{verbatim}
In addition to showing some of the formatting and indentation behavior of the reconstructed expression, this also reveals some aspects of the checker, like how the goal theorem has been negated and the definitional theorem $x=y\leftrightarrow x\subseteq y\land y\subseteq x$ has been eagerly applied during preprocessing.

One other feature that is demonstrated here is ``negation desugaring'', which requires some more explanation. Internally, Mizar represents all expressions using only $\neg$, $\forall$ and n-ary $\land$. So $P\to Q$ is mere syntax for $\neg(P\land \neg Q)$ and $\exists x, P(x)$ is actually $\neg\forall x, \neg P(x)$. (Even $P\leftrightarrow Q$ is desugared, to $\neg(P\land \neg Q)\land \neg(Q\land \neg P)$, so too much recursive use of $\leftrightarrow$ can cause a blowup in formula size.) This normalization ensures that different spellings of the same formula are not distinguished, for example if the goal is $P\lor Q$ then one may prove it by \lstinline|assume not P; thus Q;|. Double negations are also cancelled eagerly. In the formatter, we try to recover a natural-looking form for the expression by pushing negations to the leaves of an expression, and also writing $A\land B\to C\lor D$ if after pushing negations we get a disjunction such that the first few disjuncts have an explicit negation. (Mizar actually carries some annotations on formulas to help reconstruct whether the user wrote \lstinline|not A or B| or \lstinline|A implies B|, but we chose not to use this information as it is often not available for expressions deep in the checker so we wanted a heuristic that works well in the absence of annotation.)

\section{The checker}\label{sec:checker}
Mizar is broadly based on first order logic, with the non-logical axioms of Tarski--Grothendieck set theory, with a type system layered on top. While types can depend on terms, so one may call it a dependent type theory, this is not a type system in the sense of Martin-L\"{o}f type theory: types are essentially just predicates over terms in an untyped base logic, and the language allows typing assertions to be treated as predicates.

\subsection{Core syntax}\label{sec:syntax}

The core syntax of Mizar uses the following grammar:
$$a,b,c,d,e,F,G,H,K,U,V,P,R,S,T_M,T_G::=\mbox{ident}$$
\[
\begin{minipage}{.55\linewidth}
  \centering
  $\begin{array}{r@{}l@{\quad}l}
    t \mathrel{::=}{} & a,b,c,d,e & \mbox{variable} \\
    \mid{} & n & \mbox{numeral} \\
    \mid{} & \{F,G,K,U\}(\overrightarrow{t_i}) & \mbox{function application} \\
    \mid{} & H(\overrightarrow{t_i}):=t' & \mbox{local function app.} \\
    \mid{} & \operatorname{\bf the}\,\tau & \mbox{choice} \\
    \mid{} & \{t\mid \overrightarrow{b_i:\tau_i} \mid \varphi\}& \mbox{Fraenkel} \\[1ex]
    \tau \mathrel{::=}{} & \overrightarrow{\chi_i}\;T_M(\overrightarrow{t_i}) & \mbox{mode} \\
    \mid{} & \overrightarrow{\chi_i}\;T_G(\overrightarrow{t_i}) & \mbox{struct type} \\[1ex]
    \chi \mathrel{::=}{} & \pm V(\overrightarrow{t_i}) & \mbox{attribute} \\
  \end{array}$
\end{minipage}%
\begin{minipage}{.45\linewidth}
  \centering
  $\begin{array}{r@{}l@{\quad}l}
    \varphi \mathrel{::=}{} & \top & \mbox{true} \\
    \mid{} & \neg \varphi & \mbox{negation} \\
    \mid{} & \bigwedge \overrightarrow{\varphi_i} & \mbox{conjunction} \\
    \mid{} & \bigwedge_{b=t}^{t'} \varphi & \mbox{flex conjunction} \\[\jot]
    \mid{} & \forall\,b:\tau.\; \varphi & \mbox{for all} \\
    \mid{} & \{P,R\}(\overrightarrow{t_i}) & \mbox{predicate} \\
    \mid{} & S(\overrightarrow{t_i}):=t' & \mbox{local pred.} \\
    \mid{} & t\is\chi & \mbox{attribute} \\
    \mid{} & t:\tau & \mbox{qualification} \\
  \end{array}$
\end{minipage}
\]

Broadly speaking, terms are first order, meaning applications of function symbols to variables. The exception is the Fraenkel operator $\{t\mid \overrightarrow{b_i:\tau_i} \mid \varphi\}$, which should be read as ``the set of $t(\overrightarrow{b_i})$ where the $\overrightarrow{b_i:\tau_i}$ are such that $\varphi(\overrightarrow{b_i})$'', where the variables $\overrightarrow{b_i:\tau_i}$ are quantified in both $t$ and $\varphi$.

All terms $t$ have a type $\tau$, and there is a robust subtyping system -- most terms have \emph{many} types simultaneously. All types are required to be nonempty, which is what justifies the $\operatorname{\bf the}\,\tau$ constructor for indefinite description. Types are composed of a collection of attributes (a.k.a. clusters) $\chi$ applied to a base type $\{T_M,T_G\}(\overrightarrow{b_i})$. Regular types are called ``modes''; new modes can be defined by carving out a subset of an existing mode, and modes need not define a set (in particular, \lstinline|object| is a primitive mode which is the supertype of everything, and does not constitute a set per Russell's paradox). Structure types are roughly modeled after partial functions from some set of ``tags'', but they are introduced axiomatically, similarly to how structure types are treated in a dependent type theory such as Coq or Lean.

\begin{remark}
Although one can define a mode for any ZFC set or class, modes and sets are not interchangeable in Mizar because they lie in different syntactic classes. Modes are types, which go in the type argument of quantifiers such as the ${\bf Nat}$ in $\forall x:{\bf Nat}.\ x\ge 0$, while sets are objects, which can be passed to functions, like $\left|\texttt{NAT}\right|=\aleph_0$. In the MML, \texttt{NAT} is the set of natural numbers, while \texttt{Nat} and \texttt{Element of NAT} are modes which describe the type of natural numbers (and in general \texttt{Element of A} can be used to treat a set as a type).
\end{remark}

Attributes, also known as ``adjectives'' or ``clusters'', are modifiers on types, which may be described as intersection typing in modern terminology, although attributes do not stand alone as types. For example, ``\texttt{x is non empty finite set}'' means $\neg(x\is\texttt{empty})\wedge (x\is\texttt{finite})\wedge (x:\texttt{set})$. The Mizar system treats the collection of attributes on a type as an unordered list for equality comparisons.

Formulas are composed mainly from negation, conjunction and for-all, but there are some extra formula constructors that deserve attention:
\begin{itemize}
  \item Not represented in the grammar is that $\neg\neg\varphi$ is identified with $\varphi$, and internally there is an invariant that $\neg\neg$ never appears.
  \item Similarly, conjunctions are always flattened, so $P\land(Q\land R)\land S$ becomes $P\land Q\land R\land S$.
  \item Unlike most type theories, typing assertions are reified into an actual formula, so it is possible to say $4/2:\N$ and $\neg (-1:\N)$.
  \item There is also a predicate for having an attribute, and $(t:\chi\;\tau)\leftrightarrow (t\is\chi)\land (t:\tau)$ is provable.
  \item Flex-conjunction is written with syntax such as \verb|P[1] & ... & P[n]| (with literal ``\verb|...|''), and Mizar knows that this expression is equivalent both to $\forall x:\N.\ 1\le x\le n\to P(x)$ as well as to an explicit conjunction (when $n$ is a numeral). For example \verb|P[1] & ... & P[3]| will be expanded to \verb|P[1] & P[2] & P[3]| in the checker.
\end{itemize}

The different letters for functions, predicates, and variables correspond to the different roles that these can play, although in most situations they are treated similarly.
\begin{itemize}
  \item ``Locus variables'' ($a$) are used in function declarations to represent the parameters of a function, mode, cluster, etc. For example a local function might be declared as $K(\overrightarrow{a_i}):\tau:=t$ where $\tau$ and $t$ are allowed to depend on the $\overrightarrow{a_i}$, and then when typing it we would have that $K(\overrightarrow{t_i})$ has type $\tau[\overrightarrow{a_i\mapsto t_i}]$.
  \item ``Bound variables'' ($b$) are de Bruijn \emph{levels}, used to represent variables in all binding syntaxes. So for example, $\forall x:\N.\,\forall y:\N.\,x\le x+y$ would be represented as\\ $\forall \underline{\ }:\N.\,\forall \underline{\ }:\N.\,b_0\le b_0+b_1$. NB: there are two conventions for locally nameless variables, called ``de Bruijn indices'' (variables are numbered from the inside out) and ``de Bruijn levels'' (variables are numbered from the outside in), and Mizar's choice of convention is the less common one.
  \item ``Constants'' ($c$) are variables that have been introduced by a \lstinline|consider|, \lstinline|given|, or \lstinline|take| declaration, as well as Skolem constants introduced in the checker when there are existentially quantified assumptions. Constants may or may not be defined to equal some term (for example \lstinline|take x = t| will introduce a variable $x$ equal to $t$) and the checker will use this in the equalizer if available.
  \item ``Inference constants'' ($d$) are essentially checker-discovered abbreviations for terms. This is used to allow for subterm sharing, as terms are otherwise completely unshared.
  \item ``Equivalence classes'' ($e$) are used in the equalizer to represent equivalence classes of terms up to provable equality. So for example if the equalizer sees a term $x+y$ it will introduce an equivalence class $e_1$ for it and keep track of all the things known to be in this class, say $e_1=x+y=y+x=e_1+0$.
  \item Functors ($K$) and predicates ($R$) are the simplest kind of definition, they correspond to function and relation symbols in traditional FOL.
  \item Scheme functors ($F$) and scheme predicates ($P$) are the higher-order analogue of constants. They are only valid inside scheme definitions, which declare these at the start and then use them in the statement and proof.
  \item Local functors ($H$) and local predicates ($S$) are function declarations within a local scope, declared with the \lstinline|deffunc| or \lstinline|defpred| keywords. Internally every predicate carries the result of substituting its arguments, so if we declare $\operatorname{\bf deffunc}H_1(x,y):=x+y$ then an expression like $2 \cdot H_1(x-1,y)$ is really stored as $2 \cdot (H_1(x-1,y):=x-1+y)$. Some parts of the system treat such an expression like a function application, while others treat it like an abbreviation for a term.
  \item An aggregate functor ($G$) is Mizar terminology for a structure constructor $\{\mathrm{foo}:=x,\mathrm{bar}:=y\}:\mathrm{MyStruct}$, and the converse operator is a selector ($U$), which is the projection $t.\mathrm{foo}$ for a field. (The Mizar spelling for these operators is \lstinline|MyStruct(#|\ \,\lstinline|x, y #)| for the constructor and \lstinline|the foo of t| for the projection.)
\end{itemize}

\subsection{Structure of the checker}\label{sec:structure}

The checker is called whenever there is a proof such as ``\lstinline|2 + 2 = 4;|'' or ``\lstinline|x <= y by A1,Thm2;|'': basically any time there is a \lstinline|by| in the proof text, as well as when propositions just end in a semicolon, this being the nullary case of \lstinline|by|. It consists of three major components, although there is another piece that is used even before the checker is properly called on a theorem statement:

\begin{enumerate}
  \setlength{\itemsep}{1ex}
  \item[0.] \textbf{Attribute inference} (``rounding-up'') is used on every type $\overrightarrow{\chi}^-\tau$ to prove that it is equivalent to $\overrightarrow{\chi}^+\tau$ where $\overrightarrow{\chi}^+$ is a superset of $\overrightarrow{\chi}^-$. Internally we actually keep \emph{both} versions of the attributes (the ``lower cluster'' $\overrightarrow{\chi}^-$ being the one provided by the user and the ``upper cluster'' $\overrightarrow{\chi}^+$ being what we can infer by applying all inference rules), because it is useful to be able to prove that two clusters are equal if $\overrightarrow{\chi}_1^-\subseteq \overrightarrow{\chi}_2^+$ and $\overrightarrow{\chi}_2^-\subseteq \overrightarrow{\chi}_1^+$.

  \item The \textbf{pre-checker} takes the list of assumptions, together with the negated conjecture, and performs a number of normalizations on them, skolemizing existentials, removing vacuous quantifiers, and expanding some definitions. Then it converts the whole formula into \emph{disjunctive} normal form (DNF) and tries to refute each clause.

  \item The \textbf{equalizer} does most of the ``theory reasoning''. It replaces each term in the clause with an equivalence class, adding equalities for inference constants and defined constants, as well as registered equalities and \lstinline|symmetry| declarations, as well as any equalities in the provided clause. It also evaluates the numeric value (for $2+2=4$ proofs) and polynomial value (for $(x+1)^2=x^2+2x+1$ proofs) of equality classes and uses them to union classes together. These classes also have many \emph{types} since many terms are going into the classes, and all of the attributes of these types are mixed together into a ``supercluster'' and rounded-up some more, potentially leading to a contradiction. For example if we know that $x=y$ and $x$ is positive and $y$ is negative then the supercluster for $\{x,y\}$ becomes \lstinline|positive negative| which is inconsistent.

  \item The last step is the \textbf{unifier}, which handles instantiation of quantifiers. This is relatively simplistic and non-recursive: for each assumption $\forall \vec{b}.\;P(\vec{b})$ it will instantiate $P(\vec{v})$ (where $\vec{v}$ are metavariables) and then construct the possible assignments (as a DNF) to the variables that would make $P(\vec{v})$ inconsistent with some other assumption. It tries this for each forall individually, and if it fails, it tries taking forall assumptions in pairs and unifying them. If that still fails then it gives up -- it does not attempt a complete proof method.
\end{enumerate}

\noindent Finally, there is one more component which is largely separate from the ``\lstinline|by|'' automation:

\begin{enumerate}
  \item[4.] The \textbf{schematizer} is the automation that is called for justifications starting with ``\lstinline|from|''. These are scheme instantiations. In this case it is very explicitly given the list of hypotheses in the right order, so the only thing it needs to do is to determine an assignment of the scheme variables ($F,P$) to regular functors and predicates ($K,R$) or local functors and predicates ($H,S$). Users are often required to introduce local predicates in order to apply a scheme. Nullary scheme functors (a.k.a constant symbols) can be unified with arbitrary terms, however.
\end{enumerate}

Although we cannot go into full detail on the algorithms here, in the following sections we will go into some of the highlights, with an emphasis on what it takes to audit the code for logical soundness, through the lens of root-causing some soundness bugs.

\subsection{Requirements}\label{sec:requirements}

One aspect of the Mizar system that is of particular interest is the concept of ``requirements'', which are definitions that the checker has direct knowledge of. For example, the grammar given above does not make any special reference to equality: it is simply one of the possible relation symbols $R_i(x,y)$, and the relation number for equality can vary from one article to the next depending on how the accommodator decides to order the imported relation symbols. Nevertheless, the checker clearly needs to reason about equality to construct equality equivalence classes.

To resolve this, there is a fixed list of ``built-in'' notions, and one of the files produced by the accommodator (\textit{art}\texttt{.ere}) specifies what the relation/functor/mode/etc. number of each requirement is. Importantly, if a constructor is identified in this way as a requirement, this not only allows the checker to recognize and produce expressions like $x=y$, it is \emph{also an assertion that this relation behaves as expected}. If a requirement is given a weird definition, for example if we were to open \texttt{xcmplx\_0.miz} and change the definition of $x+y$ to mean subtraction instead, we would be able to prove false in a downstream theorem which enables the requirement for $+$, because we would still be able to prove $2+2=4$ by evaluation.

At the surface syntax level, requirements are enabled in groups, using the \lstinline|requirements| directive in the import section. The requirements are, in rough dependency order:

\begin{itemize}
  \setlength{\itemsep}{1ex}
  \item \texttt{HIDDEN} (introduced after the \texttt{HIDDEN} article) is a requirement that is automatically enabled for every Mizar file. It introduces the modes \lstinline|object| and \lstinline|set|, as well as $x=y$ and $x\in y$. (The article \texttt{HIDDEN} itself is somewhat magical, and cannot be processed normally because every file takes an implicit dependency on \texttt{HIDDEN}.)
  \item \texttt{BOOLE} (introduced after \texttt{XBOOLE\_0}) introduces the adjective \lstinline|x is empty|, as well as set operators $\emptyset$, $A\cup B$, $A\cap B$, $A\setminus B$, $A\oplus B$, and $A\operatorname{meets}B$ (i.e. $A\cap B\ne \emptyset$). These operators have a few extra properties such as $A\cup \emptyset=A$ that are used in the equalizer.
  \item \texttt{SUBSET} (introduced after \texttt{SUBSET\_1}) introduces the mode \lstinline|Element of A|, along with $\mathcal{P}(A)$, $A\subseteq B$, and the mode \lstinline|Subset of A|. The checker knows about how these notions relate to each other, for example if $x\in A$ then $x:\mathtt{Element}(A)$. (Because types have to be nonempty, $x:\mathtt{Element}(A)$ is actually equivalent to $x\in A\lor (A=\emptyset\land x=\emptyset)$. So the reverse implication used by the checker is $(x:\mathtt{Element}(A))\land \neg (A\is\mathtt{empty})\to x\in A$.)
  \item \texttt{NUMERALS} (introduced after \texttt{ORDINAL1}) introduces $\operatorname{succ}(x)$, \lstinline|x is natural|, the set $\N$ (spelled \lstinline|NAT| or \lstinline|omega|), $0$, and \lstinline|x is zero|. Note that $0$ is not considered a numeral in the sense of section \ref{sec:syntax}, it is a functor symbol $K_i()$. Numbers other than $0$ can be written even before the \texttt{ORDINAL1} article, but they are uninterpreted sets; after this requirement is added the system will give numerals like $37$ the type \lstinline|Element of NAT| instead of \lstinline|set|.
  \item \texttt{REAL} (introduced after \texttt{XXREAL\_0}) introduces $x\le y$, \lstinline|x is positive|, \lstinline|x is negative|, along with some basic implications of these notions.
  \begin{itemize}
    \item \texttt{NUMERALS + REAL} enables the use of flex-conjunctions $\bigwedge_{i=a}^b\varphi(i)$, since these expand to the expression $\forall i:\N.\ a\le i\le b\to \varphi(i)$ which requires $\N$ and $\le$ to write down.
  \end{itemize}
  \item \texttt{ARITHM} (introduced after \texttt{XCMPLX\_0}) introduces algebraic operators on the complexes: $x+y$, $x\cdot y$, $-x$, $x^{-1}$, $x-y$, $x/y$, $i$, and \lstinline|x is complex|. This also enables the ability to do complex rational arithmetic on numerals, as well as polynomial normalization.
\end{itemize}

\subsubsection{Soundness considerations of the requirements}\label{sec:req-soundness}

There is a slight mismatch between what the user has to provide in order to enable a requirement and what the checker gets to assume when a requirement is enabled which causes a challenge for proof export or other external soundness verification. As the list above might indicate, generally checker modules corresponding to a requirement are enabled as soon as all of the \emph{constructors} involved in stating them are available; for example we can see this with the \texttt{NUMERALS + REAL} prerequisite for flex-conjunctions. (More precisely, flex-conjunctions are enabled exactly when $\N$ and $\le$ become available.) However, the checker needs more than that to justify the manipulations it does with them.

For example the checker exploits the fact that $\bigwedge_{i=1}^3\varphi(i)$ is equivalent both to $\forall i:\N.\ 1\le i\le 3\to \varphi(i)$ and to $\varphi(1)\land \varphi(2)\land \varphi(3)$, and so this amounts to an assertion that $a\le i\le b\leftrightarrow i=a\lor i=a+1\lor\dots\lor i=b$ is provable when $a$ and $b$ are numerals such that $a\le b$. The reverse implication follows from numerical evaluation, and the forward implication is a metatheorem that can be proven by induction, assuming the existence of lemma L saying $a\le i\to i=a\lor \mathrm{succ}(a)\le i$:

\begin{theorem}
  If $a$ and $n$ are numerals and $i:\N$, then
  \vspace{-3pt}
  $$i:\N\vdash a\le i\le \mathrm{succ}^n(a)\to i=a\lor i=\operatorname{succ}(a)\lor\dots\lor i=\operatorname{succ}^n(a)\vspace{-6pt}$$
  is provable.
\end{theorem}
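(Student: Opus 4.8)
The plan is to prove this by induction on the numeral $n$, carried out in the metalanguage: for each concrete value of $n$ the argument below produces an actual object-level derivation of the displayed implication, whose length grows linearly in $n$ (matching the $n+1$ disjuncts on the right). Here $a$ can be treated as an arbitrary term of type $\N$, since it plays no special role. For the base case $n = 0$, the claim collapses to $a \le i \le a \to i = a$, which is just antisymmetry of $\le$, available once the \texttt{REAL} requirement is in force.

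For the inductive step I would assume the implication for $n$ and derive it for $n+1$. Writing $m := \mathrm{succ}^n(a)$ so that $\mathrm{succ}^{n+1}(a) = \mathrm{succ}(m)$, I would assume the hypothesis $a \le i \le \mathrm{succ}(m)$ and appeal to the discreteness of the naturals in the form $i \le \mathrm{succ}(m) \to (i \le m \lor i = \mathrm{succ}(m))$, valid for $i, m : \N$. In the second case $i = \mathrm{succ}(m) = \mathrm{succ}^{n+1}(a)$ yields the final disjunct directly; in the first case $i \le m$ together with the retained assumption $a \le i$ lets me apply the induction hypothesis to obtain $i = a \lor \cdots \lor i = \mathrm{succ}^n(a)$, and weakening this by the extra alternative $i = \mathrm{succ}^{n+1}(a)$ closes the goal.

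The only genuinely mathematical step is the discreteness lemma: that no natural number lies strictly between $m$ and its successor, so that $i \le \mathrm{succ}(m)$ forces $i \le m$ or $i = \mathrm{succ}(m)$; everything else is bookkeeping on a growing finite disjunction. The main obstacle is therefore not the induction but ensuring that this lemma (and antisymmetry) is already derivable at exactly the point in the dependency order where the flex-conjunction machinery switches on, namely as soon as $\N$ and $\le$ exist (after the \texttt{NUMERALS} and \texttt{REAL} requirements). Because the statement asserts object-level \emph{provability} rather than being a single Mizar proof, one must also be careful that the induction on $n$ lives in the metalanguage and emits, for each $n$, a concrete derivation of precisely the required shape.
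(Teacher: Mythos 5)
Your proposal is correct, and it shares the paper's overall strategy: a metalanguage induction on the numeral $n$, with all the mathematical content concentrated in a single discreteness lemma for $\N$. But the decomposition is dual to the paper's. You keep $a$ fixed and strip the \emph{top} disjunct, using $i\le\operatorname{succ}(m)\to(i\le m\lor i=\operatorname{succ}(m))$ where $m=\operatorname{succ}^n(a)$; the paper instead applies the induction hypothesis at $\operatorname{succ}(a)$ and $n-1$, stripping the \emph{bottom} disjunct, so its key lemma is $a\le i\to(i=a\lor \operatorname{succ}(a)\le i)$. The two lemmas are mirror images, both expressing that nothing lies strictly between a natural number and its successor, so either route works. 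The differences are minor but real: your version never shifts the base point, so $a$ genuinely is a fixed parameter throughout (whereas the paper's application of the IH to $\operatorname{succ}(a)$ tacitly requires the generalization you make explicit, namely that $a$ may be an arbitrary term of type $\N$ rather than a numeral literal); you also spell out the base case via antisymmetry, which the paper's sketch omits. Conversely, the paper's choice of lemma is not accidental: $a\le i\to i=a\lor\operatorname{succ}(a)\le i$ is precisely the statement the paper later proposes should be attached, with proof, to the \texttt{REAL} requirement declaration in \texttt{real.dre} (section \ref{sec:req-soundness}), so phrasing the induction around that exact form supports the surrounding narrative about where such justifications should live. Your concluding concern --- that the discreteness lemma must already be derivable at the point where flex-conjunctions switch on --- is the same gap the paper itself flags with its ``$\langle$insert lemma here$\rangle$'' placeholder, so nothing is missing relative to the paper's own level of rigor.
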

\begin{proof}
By induction on $n$. Applying the induction hypothesis with $\operatorname{succ}(a)$ and $n-1$, we get
$$i:\N\vdash \mathrm{succ}(a)\le i\le \mathrm{succ}^n(a)\to i=\operatorname{succ}(a)\lor\dots\lor i=\operatorname{succ}^n(a)$$
so it suffices to show $a\le i\to i=a\lor \mathrm{succ}(a)\le i$, and we appeal to lemma L.
\end{proof}

So ideally, when introducing $\le$ or the \texttt{REAL} requirement, one would be required to supply a proof of lemma L somehow to justify that the checker will be making use of this fact in the following article. Unfortunately, there is no actual place to inject this theorem into the system, because there is no concrete syntax for introducing requirements. That is, even taking all the \texttt{.miz} files in the MML together there is nothing that would indicate that \texttt{XXREAL\_0} is the article which allows the \texttt{REAL} requirement to be used.\footnote{Note that is also a Mizar article called \texttt{real.miz}, which contains some of the lemmas that are auto-proved by the \texttt{REAL} requirement, but there is no formal relation between the article and the requirement, and it is only an incomplete approximation to the lemmas required to justify the requirement, not formally checked. This should be easy to fix, and will more or less fall out of any attempt at proof export.}

The way this actually works is that when the accommodator sees a \lstinline|requirements REAL;| directive, it reads the (hand-written) \texttt{real.dre} XML file, which explicitly names the constructor number for $\le$ and the fact that it is in article \texttt{XXREAL\_0}. We would like to propose that this file also contains \emph{justifications} for involved constants so that the theorems aren't smuggled in without proof. (Or even better, requirement declarations become a part of the language proper, so that they can get \lstinline|correctness| proof blocks like any other justified property.)

The case of \texttt{NUMERALS + REAL} enabling flex-conjunctions is especially interesting because neither of these requirements depends on the other, so neither \texttt{numerals.dre} nor \texttt{real.dre} can state the compatibility theorem $a\le i\to i=a\lor \mathrm{succ}(a)\le i$ between them. (This doesn't require a major reorganization to fix, since of course the article that introduces \texttt{REAL}, \texttt{xxreal\_0.miz}, references the article which introduces \texttt{NUMERALS}, \texttt{ordinal1.miz}.)

\section{The analyzer}\label{sec:analyzer}

The analyzer plays an interesting role in Mizar. This is an essentially completely separate inference system from the checker, which has much stricter rules about equality of expressions, and it is what forms the ``glue'' between different lines of proof. It is a large module only because the language of Mizar is quite expansive, with 109 keywords, including:
\begin{itemize}
  \item Different kinds of definitions for modes, functors, predicates; redefinitions;
  \item Definitions by case analysis;
  \item ``Notations'' (\lstinline|synonym| and \lstinline|antonym| declarations);
  \item ``Properties'' like \lstinline|commutativity|, \lstinline|projectivity| or \lstinline|involutiveness|;
  \item Cluster registrations (existential, functor, and conditional);
  \begin{itemize}
    \item Existential clusters assert that $\overrightarrow{\chi}\ \tau$ is nonempty and hence a legal type
    \item Functor clusters assert that $t\is\overrightarrow{\chi}$ for some term $t$
    \item Conditional clusters assert that $\overrightarrow{\chi}$ implies $\overrightarrow{\chi}'$
  \end{itemize}
  \item ``Reductions'', equalities the checker automatically uses for simplification;
  \item ``Identifications'', equalities the checker automatically uses for congruence closure;
  \item Local declarations;
  \item Schemes;
  \item Reservations (variables with types declared in advance);
  \item Propositions and theorems;
\end{itemize}
and this is not an exhaustive list.

This also only covers top-level items. Inside a \lstinline|proof| block there is a different (overlapping but largely disjoint) set of legal items, called skeleton steps, which correspond to natural deduction rules. Inside a proof there is a variable that holds the current ``thesis'', the goal to prove, and the \lstinline|thesis| keyword resolves to it, unless one is in a \lstinline|now| block, where the thesis is not available and is reconstructed from the skeleton steps.

\begin{itemize}
  \item \lstinline|let x be T;| is the forall introduction rule: it transforms the thesis from $\forall x:T.\ \varphi(x)$ to $\varphi(x)$ and introduces a constant $x:T$.
  \item \lstinline|assume A;| is the implication introduction rule: it transforms the thesis from $A\to B$ to $B$ and pushes a proposition $A$, which can be referred to using \lstinline|then|.
  \item \lstinline|thus A;| is the conjunction introduction rule: it transforms the thesis from $A\land B$ to $B$, and gives $A$ as a goal to the checker.
  \item \lstinline|take t;| is the existential introduction rule: it transforms the thesis from $\exists x.\ \varphi(x)$ to $\varphi(t)$. There is also \lstinline|take x = t;| which is the same but introduces $x$ as an abbreviation for $t$; this version can also be used in a \lstinline|now| block.
  \item \lstinline|consider x being T such that A;| is existential elimination: it introduces $x:T$ and a proposition $A(x)$ that can be labeled, and gives $\exists x:T.\ A(x)$ as a goal to the checker.
  \item \lstinline|given x being T such that A;| is a combination of implication introduction and existential elimination: it transforms $(\exists x:T.\ A(x))\to B$ to $B$ and introduces $x:T$ and a proposition $A(x)$ that can be labeled.
  \item \lstinline|reconsider x = t as T;| introduces $x:T$ as a new local constant known to be equal to $t$, and gives $(t:T)$ as a proof obligation to the checker. (This is mainly used when $t:T$ is not already obvious to the type system, and allows $t:T$ to be proved by the user.)
  \item \lstinline|per cases;| is disjunction elimination, and it has two variations:
  \begin{itemize}
    \item followed by a sequence of \lstinline|suppose A_i; ... end;| blocks, it gives $\bigvee_iA_i$ as a goal to the checker and makes $A_i$ available in each block, leaving the thesis unchanged;
    \item followed by a sequence of \lstinline|case A_i; ... end;| blocks, it gives $\bigvee_iA_i$ as a goal to the checker, and if the thesis is $\bigvee_i(A_i\land B_i)$ then $B_i$ becomes the thesis in each block.
  \end{itemize}
\end{itemize}

Additionally, the formulas don't have to exactly match what the skeleton steps say. For example one can start a proof of $\forall x:\N.\ \varphi(x)$ using \lstinline|let x be set;| because \lstinline|set| is a supertype of \lstinline|Nat|. Definitional unfolding can also be forced by a skeleton step, for example if the thesis is $A\subseteq B$ then \lstinline|let x be set;| is a legal step provided the right \lstinline|definitions| directive is supplied.

The other major role of the analyzer is to elaborate types, terms, and formulas from their input form to the core grammar shown in section~\ref{sec:syntax}. There are two things that make this challenging:
\begin{itemize}
  \item Mizar heavily uses overloading, where the same function symbol can have several definitions (and \texttt{redefinition}s, which are definitions which use the same base term but can have different input and output types). These are resolved by declaration order (last declaration wins) and typing. Types are propagated exclusively from the inside out, using this type-based overload resolution, although you can use \texttt{e qua A} as a type ascription to influence the selection.
  \item Many declarations have ``invisible arguments'', also known as ``implicit arguments'' in the literature. These are filled in by a straightforward first-order unification process.\footnote{Unfortunately, the unification process does not have completely unique solutions, because attributes are unordered, and this can lead to overfitting in the MML to the Mizar attribute ordering. For example, $\texttt{rng}(f)\subseteq B$ returns the range of a function $f:A\to B$ as an element of \texttt{Subset of B} (and \texttt{rng} has hidden arguments $\texttt{rng}_{A,B}(f)$ inferred from the types). But the function type is split into separate attributes as \texttt{A-defined B-valued Function}, so if \texttt{f is B-valued C-valued Function} then it is up to a variety of implementation details whether you get the type of \texttt{rng f} as \texttt{Subset of B} or \texttt{Subset of C}. This happens in practice for the empty function, which is registered as \texttt{NAT-valued}, \texttt{RAT-valued}, and a few others.}
\end{itemize}

\section{Name resolution (MSM)}\label{sec:nameres}
The MSM (more strict Mizar) pass in the verifier is primarily concerned with name resolution and the \texttt{reserve} keyword. In \texttt{mizar-rs} this task is done as part of the analyzer, but in the original Mizar this is a separate pass, converting \emph{art}\texttt{.wsx} to \emph{art}\texttt{.msx}.

Generally speaking, name resolution is a straightforward process that proceeds as one would expect. A declaration like \lstinline|let x be set;| introduces a variable \texttt{x} until the end of the scope, which is a block such as \lstinline|proof ... end;|. Variables are locally bound by constructs like \lstinline|for x being set holds P[x]|.

Where things get interesting is when the top level command \lstinline|reserve x for set;| is used. This has two effects:
\begin{itemize}
  \item The variable \lstinline|x| can now be used as a binder without specifying the type: \lstinline|for x holds P[x]|. This would otherwise be an error. (Unlike some other type theory based systems, Mizar performs no type inference for variables based on usage.)
  \item The variable x will be implicitly quantified when used as a free variable in theorems. For example \lstinline|theorem x = x;| becomes legal and defines a theorem which says \lstinline|for x being set holds x = x|.
\end{itemize}

There is more to implicit quantification than meets the eye, though, because implicit quantification also occurs in nested positions. For example:
\begin{lstlisting}
reserve a,b,c,d for Nat;
{{c,a} : a + b < d} = d;
\end{lstlisting}
parses as:
\begin{lstlisting}
for d being Nat holds
  {{c,a} where a,c is Nat : for b being Nat holds a + b < d} = d;
\end{lstlisting}
The way we explain this behavior is by defining a tree of nested \emph{implicit quantification scopes} as follows:
\begin{itemize}
  \item The top level of a formula is an implicit quantification scope.
  \item A Fraenkel operator $\{e\mid \overline{x:T}\mid\varphi\}$ defines two implicit quantification scopes, one around the whole $\{e\mid \overline{x:T}\mid\varphi\}$ expression, and one around $\varphi$.
\end{itemize}
These scopes form a tree structure, and given a reserved variable we can mark each node according to whether it contains a use of the variable which is not in a sub-scope. For the above example we obtain the following marking structure:
\begin{center}
\vspace{1em}
\begin{tikzpicture}[auto]
  \node at (0,0) [fill=gray!30, shape=rounded rectangle] (main) {\lstinline|{{c,a} : a + b < d} = d|};
  \node at (0,-1) [fill=gray!30, shape=rounded rectangle] (fr) {\lstinline|{{c,a} : a + b < d}|};
  \node at (0,-2) [fill=gray!30, shape=rounded rectangle] (comp) {\lstinline|a + b < d|};
  \draw[->] (main) -- (fr);
  \draw[->] (fr) -- (comp);

  \node [right=2.4 of main.center] {\phantom{$a$}\quad\phantom{$b$}\quad\phantom{$c$}\quad\underline{$d$}};
  \node [right=2.4 of fr.center] {\underline{$a$}\quad\phantom{$b$}\quad\underline{$c$}\quad\phantom{$d$}};
  \node [right=2.4 of comp.center] {$a$\quad\underline{$b$}\quad\phantom{$c$}\quad$d$};
\end{tikzpicture}
\vspace{1em}
\end{center}
We then place the quantifier for each variable at every node such that none of its ancestors are marked. This means that \underline{$d$} is bound at the top, \underline{$a$} and \underline{$c$} are bound in the middle node, and \underline{$b$} is bound in the innermost node. A variable may end up bound more than once if it appears in disjoint subtrees. For example \lstinline|{0 : b = b} = {1 : b = b}| will bind $b$ twice to obtain \lstinline|{0 : for b being Nat holds b = b} = {1 : for b being Nat holds b = b}|.

Reserved variables can also have dependent types, such as \lstinline|reserve x for set,|\\ \lstinline|y for Element of x;|. In this case, a use of variable $y$ causes $x$ to also be considered as used and may cause it to be pulled in for implicit quantification. But this also interacts with the first use of reserved variables (type elision for binders), in that:
\begin{lstlisting}
for x holds {0 : y = y} = x
:: elaborates to ->
for x being set holds {0 : for y being Element of x holds y = y} = x

for z being set holds {0 : y = y} = z
:: elaborates to ->
for z being set holds
  {0 : for x being set, y being Element of x holds y = y} = z
\end{lstlisting}
where we note that the binder for $y$ has a type referring to the explicit \lstinline|for x| binder in the first example. Because implicit quantifications always go first, before any explicit quantifiers, there are ways to use this to obtain invalid expressions, for example \lstinline|for x holds y = y| would elaborate to \lstinline[mathescape]|for y being Element of $\color{red}\texttt{x}$, x being set holds y = y|, where $y$'s type attempts to refer to a binder deeper in the expression. Such constructs are rejected by the checker.\footnote{At least, \texttt{mizar-rs} rejects these constructs. Original Mizar appears not to be hardened against this, and gives an assertion violation with this example.}

\section{The parser}\label{sec:parser}
Working from back to front, we finally end up at the parser, which is the component which reads the input file and produces an abstract syntax tree (AST) which is processed by the stages we have already seen. Several papers about the Mizar parser have been written, inlcuding \cite{bylinski2012,cairns200460,nakasho2019}.

\subsection{The scanner}\label{sec:scanner}
The first component in the parser is the scanner, also known in the literature as the tokenizer. This is responsible for separating the input text into a sequence of ``tokens'' which are then built into a tree structure in the parser. This is a fairly standard design for a programming language, but the Mizar tokenizer makes some unique choices which have historically make parsing Mizar tricky.

The top of a Mizar file is the environment section of the form \lstinline|environ ... begin|, which is pre-parsed with a fixed grammar. The accommodator reads only this part of the file, and then loads all the data for the referenced articles, before the parser can proceed to the rest of the file. This is important, because the Mizar tokenizer is extensible. In contrast to some ITP systems (and to the frustration of its users), tokens are defined by an external \emph{art}\texttt{.voc} file, or in the global \texttt{mizar.voc} file used for definitions in the MML, rather than in the \emph{art}\texttt{.miz} file itself. This means that:
\begin{lstlisting}
definition
  let x be Nat;
  func foo x -> Nat equals x+1;
  correctness;
end;
\end{lstlisting}
results in some parse errors like \texttt{Functor symbol expected} at the \lstinline|foo| token, which are fixed by adding the line \lstinline|Ofoo| to the \emph{art}\texttt{.voc} file, where the \texttt{O} means that this is a functor symbol declaration.

This usability issue seems relatively easy to fix---the definition site \lstinline|func foo x -> ...| could be taken to define a new token \lstinline|foo|, rather than having to be declared separately (in a not particularly human readable file). But \texttt{mizar-rs} currently just follows the same strategy as PC Mizar here.

\subsection{The parser}\label{sec:parser-2}
The Mizar language has a rather large grammar, with 115 keywords like \lstinline|assume|, \lstinline|hereby|, \lstinline|involutiveness|, and \lstinline|reducibility|. This derives from Mizar's origin as a controlled natural language, where Mizar proofs ``read like mathematics'' and the grammar does not need to be learned explicitly. On the other hand, Mizar does not have very many synonyms, in the sense of keywords that can be freely interchanged, so when writing Mizar one very much does need to know the formal grammar.

We have already seen the top level constructs of a Mizar file in \autoref{sec:analyzer}. The formal grammar is also presented in BNF form on the Mizar website\footnote{\url{https://mizar.uwb.edu.pl/language/}}, so we will not dwell on it. However, \texttt{mizar-rs} does not (and cannot) simply use a parser generator applied to the BNF to get a working Mizar parser. There are a few reasons for this:
\begin{itemize}
  \item This makes it possible to implement the parser without a lot of lookahead: the maximum lookahead is two tokens, to differentiate \lstinline|A: 1 = 1| from \lstinline|A + 1 = 1|. Using the BNF directly would result in unbounded lookahead.
  \item There are also some constraints which would bloat the BNF; for example, the term former \lstinline[mathescape]|the $\mbox{\emph{field}}$| is legal only within a structure declaration; outside a structure one must instead use \lstinline[mathescape]|the $\mbox{\emph{field}}$ of $\mbox{\emph{struct}}$|. One would need to have two copies of the term grammar to express this in a context-free way.
  \item There are some cases of genuine ambiguity in the grammar.
  \begin{itemize}
    \item The BNF defines \lstinline[mathescape]|$\varphi$ & $\psi$ or $\chi$| without a precedence ordering, which would imply two valid parses. In reality there is a precedence ordering here and only one parse is valid. There are no user-defined formula operators, so this is a fixed list.
    \item Similarly, infix operators in the term grammar have defined precedences, so that \lstinline[mathescape]|$a$ + $b$ * $c$| parses as expected. In this case the operators are user-defined, so precedence is specified in the \emph{art}\texttt{.voc} file and the parser has to handle a dynamic-precedence grammar.
    \item One particularly nasty case of ambiguity in the grammar is caused by the fact that commas are used for separating between binder groups, and also for separating the arguments of a type. For example:
\begin{lstlisting}
reserve x for set, y for Element of x, f,g for Function of x,y, w for set;
\end{lstlisting}
    We have inserted extra spaces to clarify, but Mizar has to know the ``arity'' of \lstinline|Element| and \lstinline|Function| here to realize that in ``\lstinline|x, f,g|'', the \lstinline|x| is part of the type of \lstinline|y| while \lstinline|f,g| are new variable directives, and in ``\lstinline|x,y, w|'' the first two variables are arguments because \lstinline|Function of x,y| takes two arguments.

    To make things worse, a type operator need not have only one arity. For example \lstinline|Relation of X| and \lstinline|Relation of X,Y| are both defined (meaning subsets of $X\times X$ and $X\times Y$ respectively). Mizar (and \texttt{mizar-rs}) currently uses a heuristic to avoid backtracking in this situation, which is to take all arguments up to the maximum arity of the type operator.
  \end{itemize}
\end{itemize}

Disambiguating infix expressions in the term grammar turns out to be especially subtle and a bit mathematically interesting, so we will go into a bit more detail there. Mizar allows infix functions with more than 2 arguments, by using a tuple-like notation before or after the symbol. In general, a functor application has the form $(x_1,\dots,x_m)\;F\;(y_1,\dots,y_m)$, where $F$ is a functor symbol, and there are $m$ left and $n$ right arguments to the function. When $m=1$ the parentheses are optional, and when $m=0$ nothing is written. So this generalizes the notion of prefix, postfix, and infix functions like \lstinline|F(x,y,z)|, \lstinline|x + y|, and \lstinline|x ^2|. But just like with type constructors, functor symbols can be overloaded, and in fact the parser must disambiguate based on $(F,m,n)$, that is, it resolves a functor symbol together with its number of left and right arguments, so you could simultaneously have \lstinline|-y|, \lstinline|x-y|, and \lstinline|x-(y,z)| as independent operations associated to the functor symbol \lstinline|-|.

In general, the input to precedence resolution is a sequence of operators separated by arguments, of the form:
$$(x_{01},\dots,x_{0k_0})\;F_1\;(x_{11},\dots,x_{1k_1})\;F_2\;\dots \;F_n\;(x_{n1},\dots,x_{nk_n}).$$
The goal is to produce a suitable parenthesization of the operators (that is, a binary tree with $F_i$ on internal nodes and $(x_{ij})_{j\le k_i}$ on the leaves) respecting precedence such that each $F_i$ is applied to the right number of arguments. One might think that we can just put the operators into a tree such that $F_i$ is a left child of $F_j$ if $\operatorname{prec}(F_i)\le \operatorname{prec}(F_j)$ and $F_i$ is a right child of $F_j$ if $\operatorname{prec}(F_i)<\operatorname{prec}(F_j)$ (note, all precedence levels in Mizar are left-associative), but this produces an incorrect result in practice.

An example of this is the expression \lstinline|union union X|, which we can pre-parse as\\ $()\;\texttt{union}\;()\;\texttt{union}\;(X)$. Grouping this only using precedence and left-associativity results in the parse $(()\;\texttt{union}\;())\;\texttt{union}\;(X)$, where the first $\texttt{union}$ operator is applied to 0 left and right arguments, and the second is applied to 1 left and 1 right argument. In this case neither operator exists and we give an error, but the alternative parse $()\;\texttt{union}\;(()\;\texttt{union}\;(X))$ would apply both $\texttt{union}$ operators to 0 left and 1 right argument, which does exist (\lstinline|union X| is the union of a set of sets).

We can formalize this as a constrained optimization problem, where we want to pick a solution which violates as few precedence ordering constraints as possible, among solutions that only use operator formats that actually exist. The unconstrained precedence ordering problem can be solved in $O(n)$ in the length of the chain of operators, but to our knowledge there is no similar $O(n)$ algorithm for constrained precedence parsing. There does exist an $O(n^4)$ dynamic programming algorithm, in which we calculate for each $1\le a\le i \le b\le n$ on the substring $F_a,\dots,F_b$ of the original problem the minimal cost (or $+\infty$) of a solution having $F_i$ at the root of the tree. (There are $O(n^3)$ of these subproblems and they can be calculated from smaller subproblems in $O(n)$.)

Original Mizar does not actually use this algorithm, it uses an approximation of it which is $O(n^2)$. So (unfortunately) \texttt{mizar-rs} follows suit, since MML is already known to satisfy it. It would be interesting to see whether there exists a hybrid algorithm which works in the general case and is still $O(n^2)$ in the situation where the original Mizar algorithm works.

\section{Mizar soundness bugs}\label{sec:soundness-bugs}

One of the fortuitous side effects of going over each line of code and rewriting it to something morally equivalent in a different language is that one can find a lot of bugs. Bugs can happen even when one takes great efforts to avoid them \cite{adams2016proof,kunvcar2015consistent}, but external review can definitely help. Mizar is a large project with a long history and a small team, whose source code was not publicly accessible, with many soundness-critical parts, which is pretty much a worst case scenario for finding soundness bugs. This \texttt{mizar-rs} project has been tremendously successful in ferreting out these bugs, with no less than four proofs of false that will be given below. We reported these bugs to the Mizar developers, and they have been fixed (and the MML patched) in version 8.1.14.

While it is unfortunate that the software wasn't perfect to start with, this is evidence for the usefulness of external checkers, and it is a way for us to improve the original Mizar. We hope that by telling the story of how these bugs work we can give some sense of some of the issues that can arise when doing proof checking, as well as some more internal details whose importance may not have been obvious.

\subsection{Exhibit 1: polynomial arithmetic overflow}\label{sec:bug1}

This is the largest of the contradiction proofs, and we will show only the main part of it.\footnote{The full proof can be found at\\ \url{https://github.com/digama0/mizar-rs/blob/itp2023-2/itp2023/false1/false1.miz}.} The only non-MML notion used is the adjective \lstinline|a is x-ordered| defined as $x<a$.
\begin{lstlisting}
theorem contradiction
proof
  consider x being 1-ordered Nat such that not contradiction;
  1 is 0-ordered; then
  A1: x * x is 1-ordered; then
  consider x1 being 0-ordered Nat such that B1: x1 = x * x;
  A2: 1 < x1 by A1,B1; then
  x1 * x1 is x1-ordered by XREAL_1:155; then :: 0 < a & 1 < b -> a < a*b
  consider x2 being x1-ordered Nat such that B2: x2 = x1 * x1;
  ...
  consider x31 being x1-ordered Nat such that B31: x31 = x30 * x30;
  consider x32 being x1-ordered Nat such that B32: x32 = x31 * x31;
  C: x32 * x1 = x1 by
    B1,B2,B3,B4,B5,B6,B7,B8,B9,B10,B11,B12,B13,B14,B15,B16,
    B17,B18,B19,B20,B21,B22,B23,B24,B25,B26,B27,B28,B29,B30,B31,B32;
  x32 is x1-ordered implies x1 < x32;
  then 0 < x1 & 1 < x32 by A2,XXREAL_0:2; :: a <= b & b <= c -> a <= c
  hence contradiction by C,XREAL_1:155;   :: 0 < a & 1 < b -> a < a*b
end;
\end{lstlisting}

As mentioned earlier, there is a module in the equalizer for polynomial evaluation. This means that each expression which is a complex number will also be expressed as a polynomial in terms of basic variables, and two expressions which compute to equal polynomials will be equated. This is how things like $-(a+b)=-a+-b$ are proved automatically by the checker.

These polynomials are of the form $\sum_ic_i \prod_jx_{i,j}^{n_{i,j}}$, and the starting point for this proof was the discovery that the $n_{i,j}$ are represented as signed 32-bit integers and overflow is not checked. Turning this into an exploit is surprisingly difficult however, because one cannot simply write down $x^{2^{32}}$ because the power function is not one of the requirements (see section \ref{sec:requirements}). The best thing we have for creating larger polynomials is multiplication, but we can get to $x^{2^{32}}$ with 32 steps of repeated squaring, which is what the large block of \lstinline|consider| statements is doing.

So the strategy is to construct $x_1=x^2$, $x_2=x^4$, all the way up to $x_{30}=x^{2^{30}}$. After this things start to get weird: $x_{31}=x^{-2^{31}}$ because of signed overflow, and $x_{32}=x^{0}$. The system does not recognize $x_{32}=1$ however, because it maintains an invariant of monomial powers being nonzero, and single powers are also handled specially, so we have to go up to $x_{32}\cdot x_1=x^{2}$. Since $x_1$ is also $x^2$, the system will equate $x_{32}\cdot x_1=x_1$ which is the key step \textbf{C}. After this, we simply need to separately prove that since $x_{32}\cdot x_1$ is really $x^{2^{32}+2}$, it is strictly larger than $x_1=x^2$ as long as we choose $x>1$.

To prove this last fact we use the attribute inference mechanism to prove that all of the intermediates are strictly greater than $x_1$, since $x_1<a$ implies $x_1<a\cdot a$ as long as $x_1>1$. Hence $1<x_1<x_{32}$ so $x_1<x_{32}\cdot x_1=x_1$, which is a contradiction.

\subsection{Exhibit 2: negation in the schematizer}\label{sec:bug2}

This one requires absolutely no imports; we include the complete proof below including the import section.
\begin{lstlisting}
environ begin

scheme Foo{P[set,set]}: P[1,1] implies P[1,1]
proof thus thesis; end;

theorem contradiction
proof
  1 = 1 implies 1 <> 1 from Foo;
  hence thesis;
end;
\end{lstlisting}

This defines a very trivial scheme which just says that $P(1,1)$ implies $P(1,1)$. The interesting part is the instantiation of this scheme in the main proof. In the schematizer, we do not negate the thesis and try to prove false, we just directly match the thesis against the goal. It does not attempt any fancy higher-order unification: if it needs to unify $P(\overrightarrow{t_i})\overset{?}{=}\pm R(\overrightarrow{t_i}')$ it will just assign $P:=\pm R$ and proceed with $t_i\overset{?}{=}t_i'$ for each $i$. Or at least it should do that, but there is a bug wherein it instead assigns $P:=R$ and ignores the $\pm$ part, so we can trick it into unifying $P(x,y):=(x=y)$ and then think that $P(x,y)\overset{?}{=}(x\ne y)$ is still true. So we use $1=1$ to prove $1\ne 1$ and then prove a contradiction.

\subsection{Exhibit 3: flex-and unfolding}\label{sec:bug3}
\begin{lstlisting}
theorem contradiction
proof
  (1 = 1 or 1 = 1) & ... & (2 = 1 or 1 = 1);
  hence thesis;
end;
\end{lstlisting}

This one is bewilderingly short. We start by asserting a (true) flex-and statement $\bigwedge_{i=1}^2(i=1\lor 1=1)$. We prove this by using the forall expansion of the flex-and,\\ $\forall i:\N.\ 1\le i\le 2\to i=1\lor 1=1$, which is of course true because the $1=1$ disjunct is provable.

The second part is simply \lstinline|hence contradiction|, so we are calling the checker to \emph{disprove} the same statement. So we assume $\bigwedge_{i=1}^2(i=1\lor 1=1)$ and one of the things the pre-checker does here is to expand out the conjunction, and we would expect it to produce $(1=1\lor 1=1)\land (2=1\lor 1=1)$ from which no contradiction can be found. However, what it actually produces is $1=1\land 2=1$, which can be disproved.

To see why this happens, we have to look more specifically at the forall-expansion without the negation sugar employed thus far. What Mizar actually sees for the expansion of the flex-and expression is $\forall i:\N.\ \neg(1\le i\wedge i\le 2\wedge (i\ne 1\wedge 1\ne 1))$, except that as mentioned previously conjunctions are always flattened into their parents. The code for expanding flex-and expects the expansion body to be the third conjunct past the forall and negation, but after flattening the third conjunct is actually $i\ne 1$ and there is an unexpected fourth conjunct $1\ne 1$ that is forgotten.

\subsection{Exhibit 4: flex-and substitution}\label{sec:bug4}

This is the most worrisome of the bugs that have been presented, because it is not simply a bad line of code but rather an issue with the algorithm itself. As a result, this one survived the translation to Rust and was discovered to affect both versions, and moreover the fix could not be rolled out without breaking the MML. The MML maintainers have stepped up and fixed the proofs though after we brought this issue to their attention, so we were able to successfully excise this bug from both Mizar implementations.

\begin{lstlisting}
theorem contradiction
proof
  A: now
    let n be Nat;
    assume n > 3 & (1 + 1 <> 3 & ... & n <> 3);
    hence contradiction;
  end;
  ex n being Nat st n > 3 & (1 + 1 <> 3 & ... & n <> 3)
  proof
    take 2 + 2;
    thus 2 + 2 > 3 & (1 + 1 <> 3 & ... & 2 + 2 <> 3);
  end;
  hence thesis by A;
end;
\end{lstlisting}

To explain what is happening here, we have to talk about another aspect of flex-and expressions which was not mentioned in section \ref{sec:syntax}, which is that a flex-and expression $\bigwedge_{i=a}^b\varphi(i)$ is actually stored as five pieces of information: the bounds $a$ and $b$, the expansion $\forall i:\N.\ a\le i\le b\to \varphi(i)$, from which $\varphi(i)$ can be reconstructed (if one is careful -- see section \ref{sec:bug3}), and the bounding expressions $\varphi(a)$ and $\varphi(b)$. As the reader may have noticed, the syntax of Mizar very much prefers to only discuss the bounding expressions, since flex-and expressions in the concrete syntax are written as $\varphi(a)\wedge\dots\wedge\varphi(b)$ with literal dots, and no place to supply $a$, $b$ or $\varphi(i)$.

When one writes an expression like $P\wedge \dots \wedge Q$, Mizar essentially diffs the two expressions to determine what is changing and what the values are on each side. So if one writes $1+1\ne 3\wedge \dots\wedge n\ne 3$ then Mizar infers that the desired expression is $\bigwedge_{i=1+1}^n i\ne 3$.

The problem is that this operation is not stable under substitution. If we take that expression and substitute $n:=2+2$, then we end up with $\bigwedge_{i=1+1}^{2+2} i\ne 3$, but if we were to write out $1+1\ne 3\wedge \dots\wedge 2+2\ne 3$ ourselves we would end up with $\bigwedge_{i=1}^{2} i+i\ne 3$ instead. This is somewhat annoying, especially if one is trying to write an exploit example, but it's not obviously a soundness bug yet because internally we are still storing the expansion which has all the details -- $\bigwedge_{i=1+1}^{2+2} i\ne 3$ and $\bigwedge_{i=1}^{2} i+i\ne 3$ are different expressions.

However, the equality check between two flex-and expressions is simply $\varphi(a)=\varphi'(a)$ and $\varphi(b)=\varphi'(b)$! The example above is crafted to demonstrate that this is not sound in general, since $\bigwedge_{i=1+1}^{2+2} i\ne 3$ is false but $\bigwedge_{i=1}^{2} i+i\ne 3$ is true.

In the first part of the proof we show that $\forall n.\ \neg(n>3\wedge \bigwedge_{i=1+1}^{n} i\ne 3)$, which is true. The interesting part is the second half, where we take $n:=2+2$. At this point the thesis is $2+2>3\wedge \bigwedge_{i=1+1}^{2+2} i\ne 3$, but we cannot write this expression. What we write instead is $2+2>3\wedge (1+1\ne 3\wedge \dots\wedge 2+2\ne 3)$, which as mentioned will be interpreted as $2+2>3\wedge\bigwedge_{i=1}^{2} i+i\ne 3$. This should be rejected for not matching the thesis, but it has the same endpoints so it is accepted, and the checker is able to prove it by case analysis. If we were to write \lstinline|thus thesis| to let Mizar insert the unmentionable proposition, then the analyzer will accept it as being the right thesis but the checker will not be able to prove it.

\subsection{Honorable mention: attributes that don't exist}\label{sec:bug5}

This is not an exploitable bug to my knowledge, but it is notable for being widespread, and it is difficult for \texttt{mizar-rs} to support without resulting in very weird behavior.\footnote{It is also the only bug among those discussed here which is not fixed in the latest version (8.1.14).} Consider the following Mizar article:
\begin{lstlisting}
environ
  vocabularies ZFMISC_1, SUBSET_1;
  notations ZFMISC_1, SUBSET_1;
  constructors TARSKI, SUBSET_1;
  requirements SUBSET; ::, BOOLE;
begin
for x,B being set, A being Element of bool B st x in A holds x in B;
\end{lstlisting}
This checks as one would expect, and it is in fact a true statement (and it is not at all contrived). However, the reasoning that gets the checker to accept the proof is\ldots\ somewhat suspicious. It goes as follows:
\begin{enumerate}
  \item Suppose $A:\operatorname{Element}(\mathcal{P}(B))$, $x\in A$ and $x\notin B$.
  \item Because $A:\operatorname{Element}(\mathcal{P}(B))$ and $x\in A$, it follows that $B$ is not empty and $x:\operatorname{Element}(B)$.
  \item Because $x\notin B$, $B$ is not empty, and $x:\operatorname{Element}(B)$, contradiction.
\end{enumerate}
Because the example has been minimized, there is really not much going on in the proof because these are all essentially primitive inferences. The problem here is ``$B$ is not empty'', because the \texttt{BOOLE} requirement which supplies the \lstinline|B is not empty| predicate is not available (note the environment section). The constructor for empty is actually available in the environment because it has been brought in indirectly via the \lstinline|constructors SUBSET_1| directive, but without the requirement the checker just sees it as a normal attribute.

So what, then, is the checker doing? This seems to be a case of multiple bugs cancelling each others' effects. Requirements are internally represented by an integer index, where zero means that the requirement is not available. Normally any handling of a requirement involves a check that the requirement is nonzero first, but we forget that in step 2, and as a result we end up adding attribute $0$ to $B$, which is meaningless. The second bug is in step 3, where we again forget to ask whether the requirement index is nonzero, and so we find attribute $0$ and interpret it (correctly) as meaning that $B$ is not empty.

This would just be a curious bug, but for the fact that it is exploited all over the place because when creating articles it is standard to minimize the environment section by removing anything that keeps the proof valid, and this bug allows one to remove the \texttt{BOOLE} requirement without breaking the proof. We had to patch 46 articles that all had this same issue. In all cases we only need to add the \texttt{BOOLE} requirement to fix the issue.

\section{Results}\label{sec:results}

\begin{wraptable}{r}{0.25\textwidth}
  \vspace{-16mm}
  \caption{Line counts by category in \texttt{mizar-rs}.}
\label{fig:linecount}
\begin{center}
\begin{tabular}{lr}
  \toprule
  category & lines \\ \midrule
  accom. & 613 \\
  parser & 2781 \\
  analyzer & 6353 \\
  checker & 13631 \\
  general & 5214 \\ \midrule
  total & 28592 \\
  \bottomrule
\end{tabular}
\end{center}

  \vspace{-5mm}
\end{wraptable}

\begin{figure}[tb]
  \centering
  \includegraphics[width=\columnwidth]{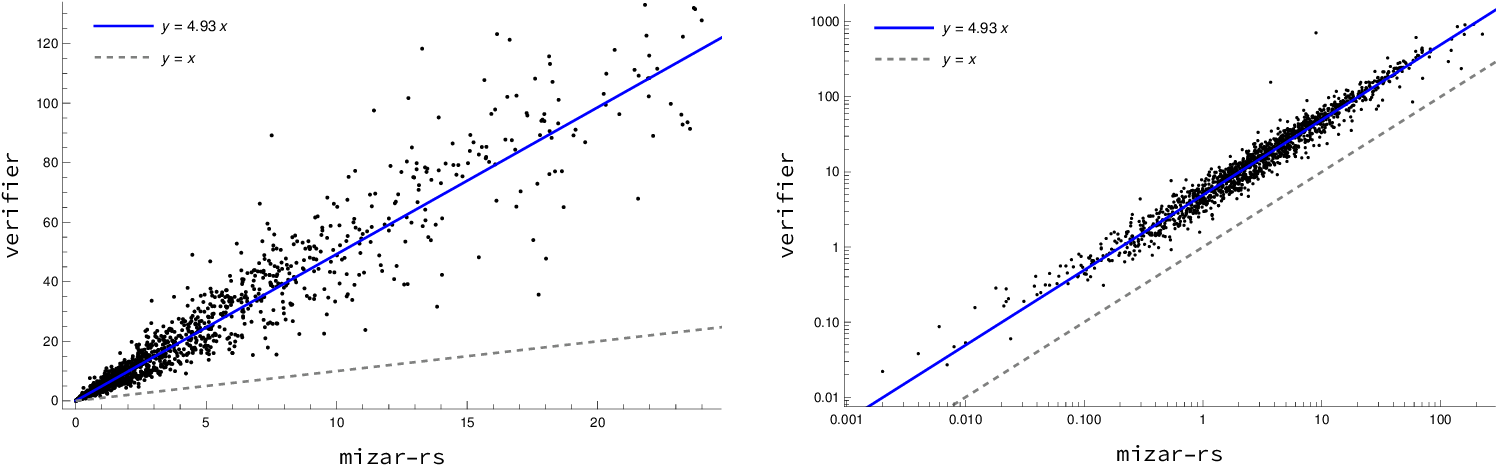}
  \caption{Scatter plot of files in the MML based on check time in \texttt{verifier} (vertical axis) vs. in \texttt{mizar-rs} (horizontal axis), in seconds. The left plot is linear and the right is log-log (including some outliers not in the left plot). The {\color{blue}blue} line is the best fit, indicating an average $4.93\times$ speedup, and the gray dashed line is a $1\times$ reference; there are no points below this line, indicating that \texttt{mizar-rs} was never slower than \texttt{verifier} on this run.}
  \vspace{-5pt}\label{fig:checker-files}
\end{figure}

The whole project is 28\,592 lines of code (see figure \ref{fig:linecount}), or 13\,631 if we restrict attention to those files used in the checker, which is 1/2 to 1/3 of the equivalent code in Pascal (see figure \ref{fig:comparison}). We credit this mainly to the language itself -- Rust is able to express many complex patterns that would be significantly more verbose to write in Pascal, and PC Mizar is also written in a fairly OOP-heavy style that necessitates a lot of boilerplate. Furthermore, Pascal is manually memory-managed while Rust uses ``smart-pointer'' style automatic memory management, so all the destructors simply don't need to be written which decreases verbosity and eliminates many memory bugs.

The performance improvements are most striking: we measured almost 5$\times$ reduction in processing time to check the MML. See \autoref{fig:checker-files}. This is most likely a combination of characteristics of the LLVM compiler pipeline, along with many small algorithmic improvements and removing redundant work. (The reduction is even larger when adding the other phases of the Mizar system -- accom, parser, MSM -- so that we can skip the costly I/O and serialization steps of section \ref{sec:user-internals}.)

\begin{table*}[tb]
  \caption{Comparison of the original (Pascal) implementation of Mizar with \texttt{mizar-rs} (Rust). Tests were performed on a 12 core 12th Gen Intel Core i7-1255U @ 2.20GHz, with 12 threads. The times are measured in CPU minutes, and the ``par.'' column gives the observed parallelism as a ratio between real time and CPU time (e.g. a full \texttt{mizar-rs} run takes $\frac{167.58}{10.1}=16.62$ min of real time), which is roughly constant because all tasks have ample concurrency. (Note that \texttt{prel/} is precalculated before the runs, so that all articles can be processed in parallel.)}
\label{fig:comparison}
\begin{tabular*}{\textwidth}{@{\extracolsep\fill}cccccc}
  \toprule
  & \multicolumn{2}{@{}c@{}}{PC Mizar} & \multicolumn{2}{@{}c@{}}{\texttt{mizar-rs}} \\\cmidrule{2-3}\cmidrule{4-5}%
  Enabled passes & CPU time & par. & CPU time & par. & ratio \\ \midrule
  \phantom{accom +} analyzer \phantom{+ checker} & 129.90 min & 9.70 & 15.34 min & 9.54 & $8.47\times$ \\
  accom + analyzer \phantom{+ checker} & 141.43 min & 9.35 & 13.80 min & 10.3 & $10.2\times$ \\
  \phantom{accom + analyzer +} checker & 698.58 min & 10.2 & 142.75 min & 9.69 & $4.89\times$ \\
  \phantom{accom +} analyzer + checker & 748.40 min & 10.5 & 165.00 min & 10.2 & $4.54\times$ \\ \midrule
  accom + analyzer + checker & 798.38 min & 9.55 & 167.58 min & 10.1 & $4.76\times$ \\
  \bottomrule
\end{tabular*}

\end{table*}

\section{Conclusion \& Future work}

We have implemented a drop-in replacement for the \texttt{verifier} checker of the Mizar system that is able to check the entire MML, which gets a significant performance improvement. Furthermore, we have improved Mizar by uncovering and reporting some soundness bugs.

While this implementation is explicitly trying not to diverge from the Mizar language as defined by the PC Mizar implementation and the MML, there are many possible areas where improvements are possible to remove unintentional or undesirable restrictions. For example, anyone who has played with Mizar will have undoubtedly noticed that all the article names are 8 letters or less, for reasons baked deeply into PC Mizar. Should \texttt{mizar-rs} become the official Mizar implementation, it would be possible to lift this restriction without much difficulty.

This project was also originally started to get proof export from Mizar, and to that end replacing one large trusted tool with another one does not seem like much of an improvement. But (bugs notwithstanding) we saw nothing while auditing the checker that is not proof-checkable or unjustified, and remain confident that proof export is possible.

%%
%% Bibliography
%%

\bibliography{references}

\end{document}